\journal{Automatica}
\newcommand{\eps}{\varepsilon}
\titlespacing*{\section}{12pt}{12pt plus 4.0pt minus 4.0pt}{12pt plus 4.0pt minus 4.0pt}
\newtheorem{thm}{Theorem}
\newtheorem{defn}[thm]{Definition}
\newtheorem{cor}[thm]{Corollary}
\theoremstyle{remark}
\begin{document}

\begin{frontmatter}
\title{Deciding Detectability for Labeled Petri Nets}

\author[tm]{Tom{\' a}{\v s}~Masopust}\ead{masopust{@}math.cas.cz}
\author[xy,xy2]{Xiang Yin\corref{cor1}}\ead{yinxiang@sjtu.edu.cn}

\cortext[cor1]{This paper was not presented at any conference. Corresponding author X. Yin. Tel. +8613636434613. Fax +862134204522. T.~Masopust was supported by RVO 67985840.}
\address[tm]{Institute of Mathematics, Czech Academy of Sciences, {\v Z}i{\v z}kova 22, 616 62 Brno, Czechia}
\address[xy]{Department of Automation, Shanghai Jiao Tong University, Shanghai 200240, China}
\address[xy2]{Key Laboratory of System Control and Information Processing, Ministry of Education of China, Shanghai 200240, China \vspace{-20pt}}

\begin{keyword} 
  Discrete event systems; Petri nets; Detectability; Opacity; Decidability
\end{keyword}

\begin{abstract}
  Detectability of discrete event systems (DESs) is a property to decide \emph{a priori\/} whether the current and subsequent states can be determined based on observations. We investigate the existence of algorithms for the verification of strong and weak detectability for DESs modeled as labeled Petri nets (LPNs). Strong detectability requires that we can \emph{always} determine, after a finite number of observations, the current and subsequent markings of the system, while weak detectability requires that we can determine, after a finite number of observations, the current and subsequent markings for \emph{some} trajectories of the system. We show that there exists an algorithm to check strong detectability that requires at least exponential space, and that there is no algorithm to check weak detectability. Our results extend the existing studies on the verification of detectability from finite-state automata to LPNs. As a corollary, we obtain that current-state opacity is undecidable even if the secret set is a singleton, which improves the known results.
\end{abstract}

\end{frontmatter}

\section{Introduction}
  State estimation is one of the central problems in systems and control theory. It plays a key role in many problems where one needs to estimate the state of the system based on imperfect observations. We investigate an important property of the state estimation problem called {\em detectability\/} for discrete event systems (DESs) modeled by labeled Petri nets (LPNs).

  In the context of DESs, the problem of state estimation has been extensively investigated due to its applications in many different problems~\cite{OzverenW1990,Ramadge1986,ShuLinYing2007}. In particular, Shu and Lin~\cite{ShuLinYing2007} proposed the concept of detectability for DES modeled by finite-state automata that characterizes \emph{a priori\/} whether or not the current and subsequent states can be determined based on observations. The property of detectability has drawn a considerable attention in the literature~\cite{keroglou2017verification,ShuLin2011,shu2013delayed,yin2017initial}, including the complexity studies on the verification of different notions of detectability~\cite{Masopust2017d,YinLafortune17,Zhang17} and the generalization of the notion to, e.g., stochastic discrete-event systems~\cite{keroglou2015detectability,keroglou2017verification,yin2017initial}.

  In this paper, we study the existence of algorithms for the verification of strong and weak detectability in the context of the work of Shu and Lin~\cite{ShuLinYing2007}, generalized from finite-state automata to unbounded LPNs. Specifically, we assume that both the Petri net structure and the initial marking are known, and that the system is partially-observed via a labeling function.

  Strong detectability requires that we can \emph{always} determine, after a finite number of observations, the current and subsequent markings of the system, and weak detectability requires that we can determine, after a finite number of observations, the current and subsequent markings for \emph{some} trajectories of the system.

  For systems modeled by finite-state automata, one can construct an algorithm checking strong detectability in polynomial time~\cite{ShuLin2011} (actually, one can construct an efficient parallel algorithm~\cite{Masopust2017d}). On the other hand, any algorithm checking weak detectability requires at least polynomial space, since this problem is PSPACE-complete~\cite{YinLafortune17,Zhang17}. It is open whether there are polynomial-time algorithms for PSPACE-complete problems, but it is known that there are no efficient parallel algorithms for PSPACE-complete problems. The results for weak detectability hold even for a very restricted type of automata~\cite{Masopust2017d}.

  For systems modeled by bounded LPNs, the results for automata imply that both of these properties are decidable for bounded LPNs, since we can explicitly enumerate the reachable markings and use the verification techniques for automata.

  However, whether the properties are also decidable for unbounded LPNs is no longer straightforward, because the reachable space of such a system is infinite in general. Very recently, Zhang and Giua~\cite{ZhangGiua2018} showed undecidability of weak detectability for LPNs with inhibitor arcs, which are computationally universal models, and stated the decidability questions of strong and weak detectability for LPNs, which are not computationally universal, as open problems. We resolve these questions.

  First, we show that verifying strong detectability for LPNs is decidable by expressing the property as a path formula in \emph{Yen's logic}, for which satisfiability was shown decidable by reduction to reachability~\cite{yen1992unified,AtigH11}. Hence strong detectability is reducible to reachability as well. We further show that deciding strong detectability is EXPSPACE-hard, and hence any algorithm verifying strong detectability requires at least exponential space, and is thus infeasible. If the conjecture that reachability is in EXPSPACE is true, then deciding strong detectability is EXPSPACE-complete.

  Then, we show that checking weak detectability for LPNs is undecidable, solving thus the second open problem that improves the recent result of Zhang and Giua~\cite{ZhangGiua2018}. We prove the result by reducing the \emph{language inclusion problem\/} of two LPNs to the weak detectability verification problem. Our proof is similar, but more involved, than the construction of Tong et al.~\cite{TongLSG17} showing that the current-state opacity problem is undecidable. The secret set in the construction of Tong et al.\ is as large as the reachable set of one of the Petri nets under consideration\footnote{Tong et al. write the secret set as $S=\{\bar{p_3}\}$, which is the set of all markings with a token in place $p_3$.}, and hence infinite in general. It is a natural question whether undecidability of current-state opacity follows from the infinity of the secret set. In other words, whether current-state opacity is decidable if the secret set is finite. As a consequence of our result, we show that current-state opacity is undecidable even if the secret set consists of a single marking. This result strengthens and completes the study of Tong et al.~\cite{TongLSG17}.

  Our work is related to several works on state estimation of Petri nets~\cite{basile2015state,GiuaSeatzu2002,ramirez2003observability,ru2010sensor,tong2016equivalence,ZhangGiua2018}. In particular, it is closely related to the work of Giua and Seatzu~\cite{GiuaSeatzu2002} who proposed several different observability properties for (unlabeled) place/transition nets. Specifically, they proposed two observability properties---marking observability and strong marking observability; the former requires that there exists a word under which the marking of the system can be precisely determined, while  the latter requires that the marking of the system can be precisely determined after a finite delay $k$.

  Marking observability and strong observability are similar (but not identical) notions to weak and strong detectability, respectively. The main difference between our results and the results of Giua and Seatzu is that LPNs are more general than the unlabeled models used by Giua and Seatzu, which is also reflected in the results---we show that weak detectability for LPNs is undecidable whereas Giua and Seatzu show that marking observability for their unlabeled models is decidable. Moreover, in the case of strong marking observability, there is a given pre-specified detection bound $k$. Therefore, this property is trivially decidable by explicitly enumerating the reachable markings of the system within $k$ steps. Notice that we do not pre-specify any such detection bound for checking strong detectability, which makes the verification of strong detectability for unbounded LPNs non-trivial because the search space is infinite in general.

  Our work is also related to the work of Ram{\'\i}rez-Trevi{\~n} et al.~\cite{ramirez2003observability}, who proposed marking detectability, which is a property closely related to strong detectability. However, Ram{\'\i}rez-Trevi{\~n} et al.\ only provide sufficient conditions for checking marking detectability and, to the best of our knowledge, (un)decidability of checking strong and weak detectability in the context of Shu and Lin~\cite{ShuLinYing2007} for LPNs has not been established in the literature so far.

  Finally, we would like to point out that detectability is a property that determines \emph{a priori\/} whether the marking of the system can be detected.  On the other hand, there is a large body of the literature on the \emph{online\/} marking estimation for Petri nets. This topic is, however, beyond the scope of this paper; an interested reader is referred to the literature~\cite{basile2015state,cabasino2017marking,dotoli2009line} for more details.

\section{Preliminaries and Definitions}\label{sec:2}
  We assume that the reader is familiar with the basic notions of Petri nets~\cite{Peterson1981}. For a set $A$, $|A|$ denotes the cardinality of $A$. An {\em alphabet\/} $\Sigma$ is a finite nonempty set (of {\em events}). A {\em word\/} over $\Sigma$ is a sequence of events of $\Sigma$. Let $\Sigma^*$ denote the set of all finite words over $\Sigma$, where the {\em empty word\/} is denoted by $\eps$, and let $\Sigma^{\omega}$ denote the set of all infinite words over $\Sigma$. For a word $u \in \Sigma^*$, $|u|$ denotes its length. Let $\mathbb{N} = \{0,1,2,\ldots\}$ denote the set of all non-negative integers.

  A {\em Petri net\/} is a structure $N=(P,T,Pre,Post)$, where $P$ is a finite set of {\em places}, $T$ is a finite set of {\em transitions}, $P \cup T \neq \emptyset$ and $P \cap T = \emptyset$, and $Pre\colon P \times T \to \mathbb{N}$ and $Post\colon P \times T \to \mathbb{N}$ are the pre- and post-incidence functions specifying the arcs directed from places to transitions and vice versa, respectively.
  A {\em marking\/} is a function $M\colon P \to \mathbb{N}$ that assigns to each place a number of tokens.
  A {\em Petri net system\/} $(N, M_0)$ is the Petri net $N$ with the initial marking $M_0$.
  A transition $t$ is {\em enabled\/} in a marking $M$ if $M(p) \ge Pre(p,t)$ for every place $p\in P$. An enabled transition $t$ can {\em fire\/} and the resulting marking $M'$ is defined as $M'(p) = M(p) - Pre(p,t) + Post(p,t)$ for every $p\in P$. We write $M\xrightarrow{\sigma}_{N}$ to denote that the sequence of transitions $\sigma$ is enabled in the marking $M$ of $N$, and $M\xrightarrow{\sigma}_N M'$ to denote that the firing of the sequence of transitions $\sigma$ results in a marking $M'$. For simplicity, we omit the subscript $N$ if the net is clear from the context. We write $L(N,M_0) = \{ \sigma \in T^* \mid M_0\xrightarrow{\sigma} \}$ to denote the set of all transition sequences enabled in the marking $M_0$.
  A marking $M$ is {\em reachable\/} in the Petri net system $(N,M_0)$ if there is a sequence of transitions $\sigma \in T^*$ such that $M_0\xrightarrow{\sigma} M$. The set of all markings reachable from the marking $M_0$ defines the reachability set of the Petri net system $(N,M_0)$, denoted by $R(N,M_0)$.

  A {\em labeled Petri net system\/} is a quadruple $G=(N,M_0,\Sigma,\ell)$, where $(N,M_0)$ is a Petri net system, $\Sigma$ is an alphabet (a set of labels), and $\ell\colon T \to \Sigma\cup\{\eps\}$ is a labeling function that assigns to each transition $t \in T$ a symbol from $\Sigma\cup\{\eps\}$. The labeling function can be extended to $\ell\colon T^* \to \Sigma^*$ defining $\ell(\sigma t) = \ell(\sigma)\ell(t)$ for $\sigma \in T^*$ and $t \in T$; we define $\ell(\lambda) = \eps$ for the empty transition sequence $\lambda$.
  We say that a transition $t\in T$ is observable if $\ell(t)\in\Sigma$; unobservable otherwise. The {\em language\/} of $G$ is defined as the set $L(G) = \{ \ell(\sigma) \mid \sigma \in L(N,M_0) \}$. Similarly, $L^\omega(G)$ denotes the set of all infinite words generated by $G$.
  Finally, for a word $s\in L(G)$, $R(G,s) = \{ M \mid \sigma\in L(N,M_0),\, \ell(\sigma)=s,\, M_0\xrightarrow{\sigma}M \}$ denotes the set of all reachable markings consistent with the observation $s$.

  As usual when detectability is discussed~\cite{ShuLin2011}, we make the following two assumptions on the system $G$:
    (i) $G$ is {\em deadlock free}, that is, in every reachable marking of the system, there is at least one transition that can fire, and
    (ii) $G$ cannot generate an infinite unobservable sequence. Notice that for finite-state systems, this assumption is equivalent to avoiding cycles of unobservable transitions.

  Considering the checking of these assumptions. Deadlock-freedom is reducible to reachability, and hence it is decidable, and EXPSPACE-hard. The existing algorithms use non-primitive recursive space~\cite{esparza}. Checking the second assumption is EXPSPACE-complete (see~\ref{appA}).

\section{Strong Detectability}\label{sec:3}
  Strong detectability is a property requiring that we can determine, after a finite number of observations, the current and subsequent states for all trajectories of the system. This property is formally defined as follows.

  \begin{defn}
    An LPN system $G=(N,M_0,\Sigma,\ell)$ is {\em strongly detectable\/} if there exists an integer $n \ge 0$ such that for every infinite word $s  \in L^{\omega}(G)$ and every finite prefix $s'$ of $s$, if $s'$ is longer than $n$, then $|R(G,s')|=1$.
  \end{defn}

  To check strong detectability, it suffices to verify whether or not there are two arbitrarily long sequences with the same observation and leading to two different markings. To formalize this idea, we use the \emph{twin-plant\/} construction for Petri nets used in the literature to test diagnosability~\cite{cabasino2012new,yin2017decidability} and prognosability~\cite{yin2018prognosis}.

  Let $G=(N,M_0,\Sigma,\ell)$ be an LPN, and let $G'=(N',M_0',\Sigma,\ell)$ be a place-disjoint copy of $G$, that is, $N'=(P',T,Pre',Post')$ where $P'=\{p' \mid p\in P\}$ is a disjoint copy of $P$ and the functions $Pre'$ and $Post'$ are adjusted in the natural way. The copy $G'$ has the same initial marking as $G$, that is, $M_0'(p')=M_0(p)$ for every $p' \in P'$. We define a Petri net $(N_{\|},M_{0,\|})=((P_{\|},T_{\|},Pre_{\|},Post_{\|}),M_{0,\|})$ that is essentially the (label-based) synchronization of $G$ and $G'$, where
  the set of places is $P_{\|}=P\cup P'$,
  the initial marking $M_{0,\|}=[M_0^{\top} \   M_0'^{\top}]^{\top}$ is the concatenation of the initial markings of $G$ and $G'$,
  the transitions $T_{\|}= (T\cup\{\lambda\})\times (T\cup\{\lambda\})\setminus \{(\lambda,\lambda)\}$ are pairs of transitions of $G$ and $G'$ without the empty pair, and the functions
  $Pre_{\|}\colon P_{\|}\times T_{\|}\to \mathbb{N}$ and $Post_{\|}\colon P_{\|}\times T_{\|}\to \mathbb{N}$ are defined as follows:
  \begin{itemize}
    \itemsep0pt
    \item
      for every $p\in P$ and every $t\in T$ with $\ell(t)=\eps$,
      we define $Pre_{\|}( p,(t,\lambda) )= Pre(p,t)$ and $Post_{\|}( p,(t,\lambda) )= Post(p,t)$;
    \item
      for every $p'\in P'$ and every $t\in T$ with $\ell(t)=\eps$,
      we define $Pre_{\|}( p',(\lambda,t) )= Pre'(p',t)$ and $Post_{\|}( p',(\lambda,t) )= Post'(p',t)$;
    \item
      for every $p\in P$ and every $t_1,t_2\in T$ with $\ell(t_1)=\ell(t_2)\neq\eps$,
      we define $Pre_{\|}( p, (t_1,t_2) )=Pre(p,t_1)$ and $Post_{\|}( p, (t_1,t_2) )=Post(p,t_1)$;
    \item
      for every $p'\in P'$ and every $t_1,t_2\in T$ with $\ell(t_1)=\ell(t_2)\neq\eps$,
      we define $Pre_{\|}( p', (t_1,t_2) )=Pre'(p',t_2)$ and $Post_{\|}( p', (t_1,t_2) ) = Post'(p',t_2)$;
    \item
      otherwise, no arc is defined
      ($Pre_{\|}(p,t) = Post_{\|}(p,t)=0$).
  \end{itemize}

  Essentially, $(N_{\|},M_{0,\|})$ is constructed to track all pairs of sequences that have the same observation. More specifically, for any $(\sigma,\sigma')\in L(N_{\|},M_{0,\|})$, we have $\ell(\sigma)=\ell(\sigma')$. On the other hand, for any $\sigma,\sigma'\in L(N,M_0)$ such that $\ell(\sigma)=\ell(\sigma')$, there exists a sequence in $(N_{\|},M_{0,\|})$ whose first and second components are $\sigma$ and $\sigma'$, respectively (possibly by inserting the empty transition sequence $\lambda$). For an example illustrating the construction, we refer the reader to the literature~\cite{cabasino2012new,yin2018prognosis}.

  The following result shows how to use the structure $(N_{\|},M_{0,\|})$ to verify strong detectability.
  \begin{thm}\label{thm_formula}
    An LPN $G=(N,M_0,\Sigma,\ell)$ is not strongly detectable if and only if, in $(N_{\|},M_{0,\|})$, there exists a sequence
    \[
      M_{0,\|}
      \xrightarrow{\ \alpha\ }_{N_{\|}}  M_1
      \xrightarrow{\ \beta\ }_{N_{\|}}   M_2
      \xrightarrow{\ \gamma\ }_{N_{\|}}   M_3
    \]
    such that
    $
      (M_1\le M_2)
      \land
      |\beta|>0
      \land
      \bigvee_{p\in P} M_3(p)\neq M_3(p').
    $
  \end{thm}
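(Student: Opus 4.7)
The plan is to prove both directions of the equivalence using the twin-plant correspondence noted after the construction, together with two classical tools for Petri nets: Dickson's lemma (every infinite sequence in $(\mathbb{N}^{k}, \le)$ contains indices $i < j$ with $N_i \le N_j$) and monotonicity of firing ($M \xrightarrow{\sigma} M'$ and $M \le N$ imply $N \xrightarrow{\sigma} N + M' - M$), plus K{\"o}nig's lemma to pass from arbitrarily long finite witnesses to an infinite one.

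For the \emph{only if} direction, I would negate the definition: for every $n$ there is an observation $s_n \in L(G)$, a prefix of some infinite word in $L^{\omega}(G)$, with $|s_n| > n$ and $|R(G, s_n)| \ge 2$. Two runs of $G$ with common label $s_n$ reaching distinct markings lift, via the twin-plant correspondence, to a firing sequence $\tau_n$ in $(N_{\|}, M_{0,\|})$ whose final marking lies in $D := \{M \mid \exists p \in P,\, M(p) \ne M(p')\}$. Each observable event of $s_n$ contributes a synchronized transition to $\tau_n$, so $|\tau_n| \ge |s_n| \to \infty$. Hence the tree of prefixes of the $\tau_n$'s is infinite and finitely branching; K{\"o}nig's lemma yields an infinite branch $\pi$, and Dickson's lemma applied to the markings along $\pi$ gives indices $i < j$ with $N_i \le N_j$. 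Taking $\alpha$ to be the first $i$ transitions of $\pi$, $\beta$ the next $j - i$, and $\gamma$ the completion of the length-$j$ prefix into some $\tau_m$ ending in $D$ delivers the required decomposition.

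For the \emph{if} direction, I would pump. Setting $\Delta = M_2 - M_1 \ge 0$, monotonicity makes $\alpha \beta^k \gamma$ fireable in the twin plant for every $k \ge 1$, reaching $M_3 + (k-1)\Delta$. I claim $\beta$ must contain at least one synchronized (observable) transition: otherwise, iterating $\beta$ would force one of the two copies of $G$ to perform arbitrarily long, hence (by K{\"o}nig) infinite, unobservable sequences, contradicting assumption (ii). Thus $|\ell(\alpha \beta^k \gamma)|$ grows linearly in $k$. The disagreement at the distinguishing place $p$ evolves as the affine function $M_3(p) - M_3(p') + (k-1)(\Delta(p) - \Delta(p'))$, which is nonzero for all but at most one value of $k$. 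For each such $k$, the twin-plant sequence $\alpha\beta^k\gamma$ projects to a pair of runs in $G$ with common observation $s^{(k)} := \ell(\alpha \beta^k \gamma)$ but distinct endpoints, so $|R(G, s^{(k)})| \ge 2$; deadlock-freedom and absence of infinite unobservable sequences (assumptions (i)--(ii)) extend $s^{(k)}$ to an infinite word in $L^{\omega}(G)$, contradicting strong detectability.

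The main obstacle I anticipate is the \emph{if} direction: ruling out a purely unobservable $\beta$ via assumption (ii), and verifying that the disagreement at the distinguishing place survives the pumping (which needs the observation that a nontrivial affine function of $k$ vanishes at most once). The remaining ingredients---Dickson's lemma, K{\"o}nig's lemma, and Petri-net monotonicity---are standard.
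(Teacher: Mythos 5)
Your proposal is correct and follows essentially the same route as the paper's proof: the ($\Rightarrow$) direction lifts arbitrarily long ambiguous pairs to the twin plant and combines K\"onig's lemma with Dickson's lemma to extract the decomposition $\alpha\beta\gamma$, and the ($\Leftarrow$) direction pumps $\beta$ using $M_1\le M_2$, rules out a purely unobservable $\beta$ via assumption (ii), and observes that the affine disagreement $M_3(p)-M_3(p')+k(\Delta(p)-\Delta(p'))$ vanishes for at most one $k$. The only cosmetic difference is that the paper gets the infinite unobservable sequence directly as $\beta_1^{\omega}$ from $M_1\le M_2$ rather than via K\"onig's lemma, but this does not affect correctness.
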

  \begin{proof}
    ($\Leftarrow$)
    Suppose that there is such a sequence. Let $M_{i,1}$ and $M_{i,2}$, for $i=1,2,3$, denote the first and the second components of $M_i$, respectively, that is, $M_i=[M_{i,1}^{\top}\ M_{i,2}^{\top}]^{\top}$ where the lengths of $M_{i,1}$ and $M_{i,2}$ coincide and are equal to the number of places in $G$. Let $\alpha=(\alpha_1,\alpha_2)$, $\beta=(\beta_1,\beta_2)$, and $\gamma=(\gamma_1,\gamma_2)$. By the construction of $N_{\|}$, $\ell(\alpha_1)=\ell(\alpha_2)$, $\ell(\beta_1)=\ell(\beta_2)$, and $\ell(\gamma_1)=\ell(\gamma_2)$. Since $|\beta|>0$, either $\beta_1$ or $\beta_2$ is not the empty transition; without loss of generality, let $\beta_1\not=\lambda$.

    Let $n\in \mathbb{N}$ be an arbitrary natural number. We consider an infinite sequence
    $
      \alpha_1\beta_1^{m+1}\gamma_1 w \in L^{\omega}(G),
    $
    where $w$ is an arbitrary infinite continuation of the sequence $\sigma_1=\alpha_1\beta_1^{m+1}\gamma_1$ such that $\ell(w)\neq\eps$; such a continuation exists by the assumptions that the system is deadlock free and there is no infinite unobservable sequence. The sequence $\sigma_1$ is well defined in $G$ because $M_1\leq M_2$, and hence the sequence $\sigma_2=\alpha_2\beta_2^{m+1}\gamma_2 \in L(G)$ is also well defined in $G$.
    Let $M_0\xrightarrow{\sigma_1}_N M_{\sigma_1}$ and  $M_0\xrightarrow{\sigma_2}_N M_{\sigma_2}$. Then
    \[
      M_{\sigma_i}= M_{i,3} + m \cdot (M_{i,2}-M_{i,1})\,.
    \]
    Let $p$ be a place such that $M_3(p)\not= M_3(p')$. Then we can always find an integer $m\geq n$ such that $M_{\sigma_1}(p)\not=M_{\sigma_2}(p')$. Since $s = \ell(\alpha_1\beta_1^{m+1}\gamma_1) = \ell(\alpha_2\beta_2^{m+1}\gamma_2)$ is a prefix of $\ell(\alpha_1\beta_1^{m+1}\gamma_1 w)$, we have that $\{M_{\sigma_1},M_{\sigma_2}\}\subseteq R(G,s)$, and hence $|R(G,s)|>1$. Moreover, $M_1\le M_2$ implies the existence of $\beta_1^{\omega}$ in $G$, and hence $\ell(\beta_1)\neq\eps$, because $\ell(\beta_1)=\eps$ would give $\ell(\beta_1^{\omega})=\eps$, which contradicts the assumption that no such sequence exists. Therefore, $|s|\geq m+1>n$. Since $n$ was chosen arbitrarily, the system is not strongly detectable.

    ($\Rightarrow$)
    Suppose that the system is not strongly detectable, that is,
      for every $n\in \mathbb{N}$
      there exist $s\in L^\omega(G)$ and
      a finite prefix $s'$ of $s$ such that
      $|s'|\geq n$ and $|R(G,s')|>1$.
    Then, for any $n\in \mathbb{N}$, there are sequences $\alpha,\beta \in L(N,M_0)$ such that
      (i) $\ell(\alpha) = \ell(\beta)$ and $|\ell(\alpha)| = |\ell(\beta)| \geq n$, and
      (ii) $M_0\xrightarrow{\alpha}_N  M_{\alpha}$ and $M_0 \xrightarrow{\beta}_N  M_{\beta}$ with $M_{\alpha}\not= M_{\beta}$.
    By (i) and the construction of $N_{\|}$,
    there exists a sequence $\sigma\in L(N_{\|},M_{0,\|})$  in $N_{\|}$ such that $\sigma$ is in the form of $\sigma=(\alpha,\beta)$.
    Let $\sigma=t_1t_2\cdots t_k$ for some $t_i\in T_{\|}$ and $k\ge n$, and let $M_1,M_2,\dots,M_{k}$ be the markings induced by the transitions, i.e.,
    $
                           M_{0,\|}
      \xrightarrow{t_1}_{N_{\|}} M_1
      \xrightarrow{t_2}_{N_{\|}} M_2
      \xrightarrow{t_3}_{N_{\|}} \cdots
      \xrightarrow{t_k}_{N_{\|}} M_{k},
    $
    where $M_k=[M_{\alpha}^\top  \ M_{\beta}^\top]^\top$.

  Consider a computation tree consisting of the computations described above. There is such a computation of length at least $n$ for every $n\in\mathbb{N}$, and hence the tree is infinite. Therefore, by K\"onig's lemma~\cite{koenig} stating that every finitely branching infinite tree contains an infinite path, there is an infinite path $C_0,C_1,C_2,\ldots$ in the tree, where $C_0$ is the initial marking $M_{0,\|}$. Then, since vectors of natural numbers with the product order form a well-quasi-ordering, Dickson's lemma~\cite{dickson1913finiteness} implies that there are $i < j$ such that $C_i \le C_j$. Since the tree consists only of computations of the above form, $C_0,C_1,\ldots,C_j$ is a prefix of such a computation, and hence there is a sequence $C_{j+1},\ldots,C_m$ such that $C_0,C_1,\ldots,C_j,C_{j+1},\ldots,C_m$ is a computation of the above form, that is, $C_m$ is of the form $[M_{\alpha}^\top  \ M_{\beta}^\top]^\top$ for some $\alpha$ and $\beta$ satisfying (i) and (ii) above.
    Consider the sequence
    \begin{equation*}
                                                M_{0,\|}
      \xrightarrow{t_1 \cdots t_i}_{N_{\|}}      C_{i}
      \xrightarrow{t_{i+1}\cdots t_{j}}_{N_{\|}} C_{j}
      \xrightarrow{t_{j+1}\cdots t_m}_{N_{\|}}   C_{m}\,.
    \end{equation*}
    Since $C_m = [M_{\alpha}^\top  \ M_{\beta}^\top]^\top$ and $M_{\alpha}\not= M_{\beta}$, there is a place $p$ such that $C_m(p) = M_{\alpha}(p) \neq M_{\beta}(p') = C_m(p')$. Finally, $|t_{i+1}\cdots t_j|>0$, because $i<j$, and hence the sequence satisfies the statement of the theorem.
  \end{proof}

  To state our first result, we briefly recall a fragment of Yen's path logic, the satisfiability of which is decidable~\cite{yen1992unified,AtigH11}. Let $M_1,M_2,\ldots$ be variables representing markings and $\sigma_1,\sigma_2,\ldots$ be variables representing finite sequences of transitions. Every mapping $c \in \mathbb{N}^{|P|}$ is a term. For all $j > i$, if $M_i$ and $M_j$ are marking variables, then $M_j - M_i$ is a term, and if $T_1$ and $T_2$ are terms, then $T_1+T_2$ and $T_1-T_2$ are terms.
  If $c \in \mathbb{N}$ and $t \in T$, then $\#_t(\sigma_1) \le c$ and $\#_t(\sigma_i) \ge c$ are transition predicates, where $\#_t(\sigma)$ denotes the number of occurrences of $t$ in $\sigma$. If $T_1$ and $T_2$ are terms and $p_1,p_2 \in P$ are places, then $T_1(p_1) = T_2(p_2)$, $T_1(p_1) < T_2(p_2)$, and $T_1(p_1) > T_2(p_2)$ are marking predicates. A predicate is a positive boolean combination of transition and marking predicates. A {\em path formula\/} is a formula of the form
  $
    (\exists \sigma_1, \sigma_2,\ldots, \sigma_n)
    (\exists M_1,\ldots, M_n)
    (M_0 \xrightarrow{\sigma_1} M_1 \xrightarrow{\sigma_2} \cdots \xrightarrow{\sigma_n} M_n)
    \land
    \varphi(M_1,\ldots,M_n,\sigma_1,\ldots,\sigma_n)
  $
  where $\varphi$ is a predicate.

  \begin{thm}
    Strong detectability is decidable for LPNs.
  \end{thm}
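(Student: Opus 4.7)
The plan is to invoke Theorem~\ref{thm_formula}, which characterizes the failure of strong detectability by the existence of a specific firing sequence in the twin-plant system $(N_{\|},M_{0,\|})$, and to express that existence statement as a path formula in the Yen-fragment introduced immediately above. Since satisfiability of such path formulas reduces to Petri-net reachability, which is decidable, decidability of strong detectability will follow.

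First I would introduce three transition-sequence variables $\sigma_1,\sigma_2,\sigma_3$ and three marking variables $M_1,M_2,M_3$, and impose the structural skeleton $M_{0,\|}\xrightarrow{\sigma_1}M_1\xrightarrow{\sigma_2}M_2\xrightarrow{\sigma_3}M_3$ in $N_{\|}$. The accompanying predicate $\varphi$ would then be the conjunction of three sub-predicates corresponding to the three conditions of Theorem~\ref{thm_formula}: (a)~the inequality $M_1\le M_2$, rewritten as $\bigwedge_{q\in P_{\|}}\bigl(M_1(q)<M_2(q)\lor M_1(q)=M_2(q)\bigr)$; (b)~the non-emptiness $|\sigma_2|>0$, rewritten as the finite disjunction $\bigvee_{t\in T_{\|}}\#_t(\sigma_2)\ge 1$ over all transitions of $N_{\|}$; and (c)~the disparity $\bigvee_{p\in P}\bigl(M_3(p)<M_3(p')\lor M_3(p)>M_3(p')\bigr)$, which encodes $\bigvee_{p\in P} M_3(p)\neq M_3(p')$. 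Each conjunct is a positive boolean combination of the transition and marking primitives admissible in Yen's fragment, so $\varphi$ itself is a legitimate predicate and the whole is a legitimate path formula.

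By Theorem~\ref{thm_formula}, the LPN $G$ is not strongly detectable if and only if this path formula is satisfied in $(N_{\|},M_{0,\|})$. Appealing to the result of Yen~\cite{yen1992unified}, as corrected by Atig and Habermehl~\cite{AtigH11}, satisfiability of such path formulas reduces to the Petri-net reachability problem, and is therefore decidable. Consequently, strong detectability of LPNs is decidable.

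The main obstacle is purely syntactic: one has to check that the three conditions of Theorem~\ref{thm_formula} actually sit inside the positive fragment of Yen's logic, since neither $\le$, $\neq$, nor $|\sigma_2|>0$ is a primitive. The rewrites above, using $<,>,=$ atoms under positive disjunctions and a finite disjunction over the transition set of $N_{\|}$, handle this concern; once this encoding is in place, the decidability claim is a direct consequence of the existing result on Yen's logic.
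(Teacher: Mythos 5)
There is a genuine gap in precisely the step you single out as the main obstacle. In the fragment of Yen's logic defined just before the theorem, a bare marking variable $M_i$ is \emph{not} a term: the only terms are constant vectors $c\in\mathbb{N}^{|P|}$, differences $M_j-M_i$ with $j>i$, and sums and differences of such terms. Consequently your atoms $M_1(q)<M_2(q)$ in (a) and, more seriously, $M_3(p)<M_3(p')$ in (c) are not well-formed marking predicates. Condition (a) is easily repaired by comparing the term $M_2-M_1$ against the zero constant. Condition (c), however, cannot be repaired inside your three-variable skeleton: there is no way to express the single marking $M_3$ as a difference of marking variables plus a constant, so the disparity predicate falls outside the fragment and the appeal to decidability of Yen's logic does not yet go through.

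The paper resolves exactly this point by using four segments $M_{0,\|}\xrightarrow{\sigma_1}M_1\xrightarrow{\sigma_2}M_2\xrightarrow{\sigma_3}M_3\xrightarrow{\sigma_4}M_4$ together with the transition predicate $|\sigma_1|=0$ (that is, $\bigwedge_{t\in T}\#_t(\sigma_1)\le 0$), which pins $M_1=M_{0,\|}$; the final marking can then be written as the legitimate term $(M_4-M_1)+M_{0,\|}$, and the condition $\bigvee_{p\in P}M_4(p)\neq M_4(p')$ becomes a disjunction of valid marking predicates over that term. Apart from this missing device your argument follows the paper's route---encode the characterization of Theorem~\ref{thm_formula} as a path formula and invoke the reduction of its satisfiability to reachability---but as written the key predicate is not expressible in the fragment, which is the essential technical content of the paper's proof.
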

  \begin{proof}
    The formula of Theorem~\ref{thm_formula} can be expressed as the following path formula:
    \begin{multline*}
    (\exists \sigma_1, \sigma_2, \sigma_3, \sigma_4)(\exists M_1,M_2,M_3,M_4) \\
      (M_{0,\|}
      \xrightarrow{\ \sigma_1\ }_{N_{\|}}  M_1
      \xrightarrow{\ \sigma_2\ }_{N_{\|}}  M_2
      \xrightarrow{\ \sigma_3\ }_{N_{\|}}  M_3
      \xrightarrow{\ \sigma_4\ }_{N_{\|}}  M_4)\\
    \land
      (M_2\le M_3)
      \land
      |\sigma_1|=0
      \land
      |\sigma_3|>0
      \land
      \bigvee_{p\in P} M_4(p)\neq M_4(p'),
    \end{multline*}
    where $|\sigma_1|=0$ is equivalent to $\land_{t\in T} \#_{t}(\sigma_1) \le 0$ and $|\sigma_3|>0$ is equivalent to $\lor_{t\in T} \#_{t}(\sigma_3) > 0$.
    Note that $M_4$ can be written as term $M_4-M_1+M_{0,\|}$,
    where $M_4-M_1$ and $M_{0,\|}$ are terms ($M_4$ and $M_1$ are marking variables but $M_{0,\|}$ is a constant).
    Therefore,
    the last term $\bigvee_{p\in P} M_4(p)\neq M_4(p')$ is equivalent to
     \[\bigvee_{p\in P}
     \left(\begin{array}{c c}
          &(M_4-M_1+M_{0,\|})(p)\!>\! (M_4-M_1+M_{0,\|})(p')\\
     \vee\!\!\! &(M_4-M_1+M_{0,\|})(p)\!<\!(M_4-M_1+M_{0,\|})(p')
     \end{array}
     \right),
     \] which is a valid predicate of Yen's path logic.
  \end{proof}

  Although the satisfiability of path formulae of Yen's logic is decidable, its complexity is open. There is a so-called {\em increasing\/} fragment of Yen's logic that requires that the path formula uses only marking predicates and $\varphi(M_1,\ldots,M_n,\sigma_1,\ldots,\sigma_n)$ implies that $M_n \ge M_1$. Deciding satisfiability of this fragment is EXPSPACE-complete~\cite{AtigH11}. However, the reader can see that our formula is not an increasing path formula, and hence the existing results do not imply any upper bound complexity.

  To discuss the lower bound complexity, we show that checking strong detectability requires at least exponential space. Our approach is to reduce the \emph{coverability problem}, which is know to be EXPSPACE-complete~\cite{esparza}.

  \begin{thm}
    Checking strong detectability is EXPSPACE-hard.
  \end{thm}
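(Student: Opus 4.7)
The plan is to reduce the coverability problem, which is EXPSPACE-complete, to the complement of strong detectability. Given a coverability instance $(N, M_0, M)$ with $N=(P,T,Pre,Post)$ and target marking $M$, I construct in polynomial time an LPN $G$ such that $G$ is not strongly detectable if and only if $M$ is coverable from $M_0$ in $N$; EXPSPACE-hardness of strong detectability follows since the class is closed under complement.

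The net underlying $G$ retains all places and transitions of $N$, adds four auxiliary places $p_{\iota}, p_{r}, p_{s}, p_{f}$, and four kinds of new transitions. First, an unobservable transition $t_{\mathrm{go}}$ consumes the single initial token of $p_{\iota}$ and produces $M_0$ on $P$ together with one token each in $p_{r}$ and $p_{s}$. Second, each $t\in T$ is lifted to a transition $\widetilde t$ that requires and returns a token in $p_{r}$, simulates $t$ on $P$, and carries a distinct observable label. Third, an unobservable transition $t_{\mathrm{check}}$ consumes $p_{s}$ and requires $M$ to be present on $P$ (consuming and returning those tokens), producing one token in $p_{f}$. Fourth, a self-loop transition $t_{\mathrm{dummy}}$ on $p_{r}$ carries a fresh observable label and does nothing else. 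The initial marking of $G$ puts a single token in $p_{\iota}$ and nothing elsewhere.

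Two quick sanity checks ensure the standing assumptions and isolate the source of ambiguity. The LPN $G$ is deadlock-free because $t_{\mathrm{go}}$ is initially enabled and $t_{\mathrm{dummy}}$ is enabled forever afterwards; there is no infinite unobservable sequence because $t_{\mathrm{go}}$ and $t_{\mathrm{check}}$ each fire at most once, their triggers $p_{\iota}$ and $p_{s}$ being consumed without replenishment. Since each observable transition carries a distinct label, any observation uniquely determines the sequence of lifted $\widetilde t$ and $t_{\mathrm{dummy}}$ firings and therefore pins down the marking on $P\cup\{p_{\iota},p_{r}\}$; the only residual freedom is whether $t_{\mathrm{check}}$ has been fired, which toggles $p_{s}$ and $p_{f}$ in tandem.

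Correctness then reduces to tying that one bit to coverability. If $M$ is coverable by a firing sequence $\sigma$, lifting $\sigma$ drives $G$ to a marking from which either skipping or firing $t_{\mathrm{check}}$ yields two consistent but distinct global markings; padding with arbitrarily many $t_{\mathrm{dummy}}$ firings produces observations $s$ of unbounded length with $|R(G,s)|\ge 2$, so $G$ is not strongly detectable. Conversely, if $M$ is not coverable then $t_{\mathrm{check}}$ is never enabled, $p_{f}$ is empty in every run, and the marking becomes a function of the observation, so $G$ is strongly detectable. The hard part of the argument is precisely this engineering step: preserving the two standing assumptions on $G$ while forcing the sole source of marking ambiguity to be a single coverability-gated bit; this is exactly what the one-shot tokens $p_{\iota}, p_{s}$, the distinct labels on the lifted $\widetilde t$, and the always-enabled padding transition $t_{\mathrm{dummy}}$ jointly accomplish.
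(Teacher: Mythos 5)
Your proposal is correct and follows essentially the same approach as the paper: reduce coverability by relabeling the original net so that every original transition is observable with a distinct label, add unobservable transition(s) whose firing is gated on covering $M$, and add an always-enabled observable loop to preserve deadlock-freedom, so that the sole source of marking ambiguity is the coverability-gated unobservable firing. The only (immaterial) difference is the ambiguity mechanism: you use a single check transition whose ``fired or not'' status is undeterminable, whereas the paper uses two indistinguishable unobservable transitions depositing a token into one of two different places.
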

  \begin{proof}
    Given a Petri net system $(N,M_0)$, the coverability problem asks whether there is a reachable marking that covers a given marking $M$.

    Let $(N,M_0)$ and $M$ be the instance of the coverability problem. We construct a new Petri net as follows (see Fig.~\ref{pnhardproof} for an illustration).
    \begin{figure}
      \centering
      \includegraphics[scale=.75]{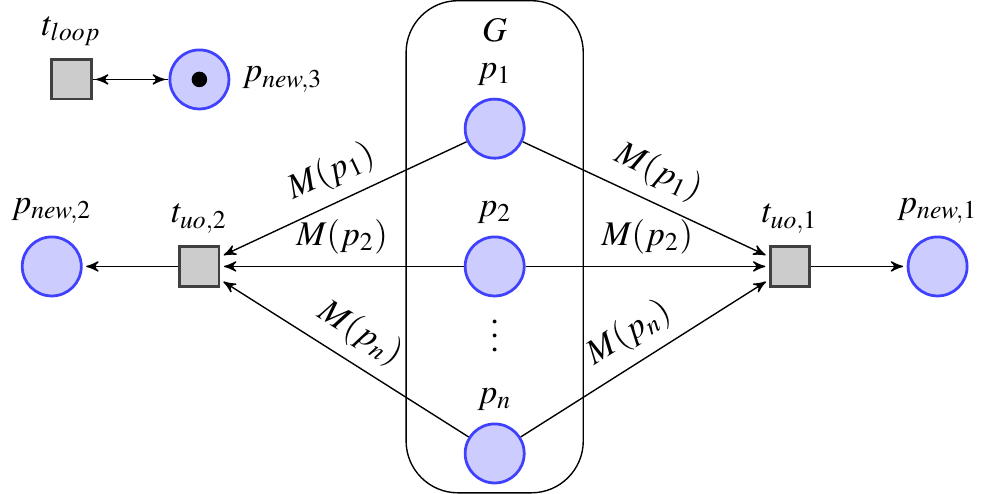}
      \caption{Sketch of the hardness construction}
      \label{pnhardproof}
    \end{figure}
    We add two new unobservable transitions $t_{uo,1}$ and $t_{uo,2}$, and two new place $p_{new,1}$ and $p_{new,2}$ initialized with zero tokens to $(N,M_0)$, and we define $Pre(p,t_{uo,1}) = Pre(p,t_{uo,2}) = M(p)$ for $p \in P$, and $Post(p_{new,i}, t_{uo,i})=1$ for $i=1,2$; unspecified mappings are defined as zero.
    We add a new isolated place $p_{new,3}$ initialized with one token, and define a new self-loop transition $t_{loop}$ in $p_{new,3}$ to guarantee that the system is deadlock free.
    Finally, we define the labeling function $\ell\colon T\cup\{t_{uo,1},t_{uo,2},t_{loop}\}\to T\cup\{t_{loop}\}$ by $\ell(t) = t$ for $t\in  T\cup\{t_{loop}\}$, and $\ell(t_{uo,1})=\ell(t_{uo,2}) = \eps$.

    By the construction, unobservable transitions $t_{uo,1}$ and $t_{uo,2}$ can be fired if and only if  $M$ can be covered.
    Thus, if these two unobservable transitions are firable, then the modified system is not strongly detectable because we cannot distinguish between the tokens in $p_{new,1}$ and $p_{new,2}$.
    On the other hand, if these two unobservable transitions are not firable, then all firable transitions are observable, which directly implies that the system is strongly detectable.
    Overall, the original system  covers $M$ if and only if the modified system is strongly detectable.   Hence, deciding strong detectability is EXPSPACE-hard.
  \end{proof}

\section{Weak Detectability}\label{sec:4}

In some applications, we only need to determine, after a finite number of observations, the current and subsequent states for some trajectories of the system.
This property is referred to as weak detectability and is defined as follows.

\begin{defn}
    An LPN system $G=(N,M_0,\Sigma,\ell)$ is {\em weakly detectable\/} if there exists an integer $n \ge 0$ and a word $s \in L^{\omega}(G)$ such that $|R(G,s')|=1$ for any prefix $s'$ of $s$ of length at least $n$.
\end{defn}

Deciding weak detectability for DES modeled by finite-state automata is a PSPACE-complete problem.
We now show that it is undecidable for DES modeled by unbounded LPNs.

\begin{thm}\label{thm5}
  Weak detectability is undecidable for LPNs.
\end{thm}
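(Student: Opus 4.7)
The plan is to reduce the language inclusion problem for labeled Petri nets---a classical undecidable problem---to the verification of weak detectability. Given two LPNs $G_1$ and $G_2$ over a common alphabet $\Sigma$, the objective is to construct an LPN $G$ (meeting our two standing assumptions) such that $G$ is weakly detectable if and only if $L(G_1)\not\subseteq L(G_2)$. Since the reduction is computable, undecidability of language inclusion transports to weak detectability, yielding Theorem~\ref{thm5}.

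The construction glues place-disjoint copies of $G_1$ and $G_2$ together with a small choice gadget. A fresh source place $p_0$ holds a single token which is consumed by one of two unobservable transitions $\tau_1,\tau_2$; firing $\tau_i$ produces the initial marking of the copy of $G_i$ together with one token in a fresh branch-marker place $p_{B,i}$ that no later transition ever touches. The transitions of the $G_i$-copies carry their original labels in $\Sigma$, so the observed language reachable through $\tau_i$ is exactly $L(G_i)$, and every marking reached via the $\tau_i$-branch is identifiable by the presence of a token in $p_{B,i}$. An isolated observable self-loop on a fresh place enforces deadlock-freedom, and $G_1$ and $G_2$ are assumed preprocessed so that neither contains unobservable cycles.

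Correctness proceeds in two directions. If $L(G_1)\subseteq L(G_2)$, then every observed prefix $s'$ of every infinite observation of $G$ lies in $L(G_2)$; by additionally introducing a symmetric duplicate of the $\tau_2$-branch, distinguished by a second marker place, one ensures that at least two distinct markings are consistent with every such $s'$, so $|R(G,s')|\ge 2$ for every prefix and $G$ is not weakly detectable. Conversely, if $L(G_1)\not\subseteq L(G_2)$, pick a witness $w\in L(G_1)\setminus L(G_2)$, extend it to an infinite observation along the $\tau_1$-branch, and argue that beyond length $|w|$ only $\tau_1$-branch markings remain consistent with the observation; within that branch the net is designed so that every sufficiently long observation pins down a single marking, yielding weak detectability.

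The main obstacle is exactly this last "single marking" requirement: because $G_1$ is in general a nondeterministically labeled net, several distinct firing sequences may share an observation and reach different markings, so some extra uniqueness-enforcing machinery must be woven into the $\tau_1$-branch without destroying the ability of the $\tau_2$-branch (and its duplicate) to mimic every word of $L(G_2)$. Reconciling these two requirements is the technical heart of the construction and is what makes the proof more involved than the opacity reduction of Tong et al.\ recalled in the introduction. Once this is in place, taking the secret to be the singleton consisting of the unique $\tau_1$-branch marking reached along the witness yields the announced corollary on current-state opacity with a singleton secret set.
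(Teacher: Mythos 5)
Your overall strategy---reducing language inclusion of LPNs to weak detectability, with a duplicated $G_2$-branch so that inclusion forces every consistent state estimate to contain at least two markings---is exactly the paper's approach, and your direction of the equivalence ($G$ weakly detectable iff $L(G_1)\not\subseteq L(G_2)$) is the right one. However, there is a genuine gap at precisely the point you yourself flag as ``the technical heart'': you assert that ``within that branch the net is designed so that every sufficiently long observation pins down a single marking,'' but you never supply that design. This is not a detail one can wave at. Since $G_1$ is a nondeterministically labeled net, a witness $w\in L(G_1)\setminus L(G_2)$ may be generated by many firing sequences reaching many distinct markings, and no amount of further observation \emph{inside a faithful copy of $G_1$} need ever collapse that set to a singleton. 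Without a concrete mechanism, the forward direction of your reduction is unproven, and the corollary on singleton-secret opacity (which needs that unique marking to exist) collapses with it.

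The paper's resolution is worth internalizing because it sidesteps the uniqueness problem rather than solving it head-on: instead of trying to disambiguate the markings reached by $w$, it \emph{erases} them. After the $G_1$-simulation, a second $x$-transition hands control to a draining gadget: for every place of $G_1$ there is an $a$-labeled transition that consumes one token from that place, so each observed $a$ removes exactly one token from the $G_1$ part; a subsequent $b$-labeled transition moves to a place $p_3$ with a $b$-self-loop. Taking $k$ to be the \emph{maximum} total token count over all markings of $G_1$ reachable via $w$, the observation $xwxa^kb^\omega$ is generable, and any firing sequence consistent with $xwxa^k$ must have started from a marking with at least $k$ tokens---hence exactly $k$---and must have consumed all of them. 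All the nondeterministic markings therefore collapse to the single empty-$G_1$-part marking with one token in $p_3$, while the $G_2$-branches are excluded because $w\notin L(G_2)$. If you want to repair your write-up, you should replace your unobservable choice transitions and unspecified ``uniqueness machinery'' with this draining construction (or an equivalent one), and check that adding it does not let the $G_2$-branches fail to mimic the resulting $xwxa^ub^\omega$ observations when $L(G_1)\subseteq L(G_2)$.
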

\begin{proof}
Let $G_1$ and $G_2$ be two  LPNs with no unobservable transitions, i.e., $\ell(t)$ is not the empty word for any transition $t$.
It is well-known that the inclusion problem, which asks whether $L(G_1)\subseteq L(G_2)$, is undecidable~\cite{Hack76} for LPNs even when all transitions are observable.
Next, we reduce the inclusion problem to the weak detectability verification problem.

  From $G_1$ and $G_2$, we construct an LPN $G$ as follows.
  \begin{figure}
    \centering
    \includegraphics[scale=.8]{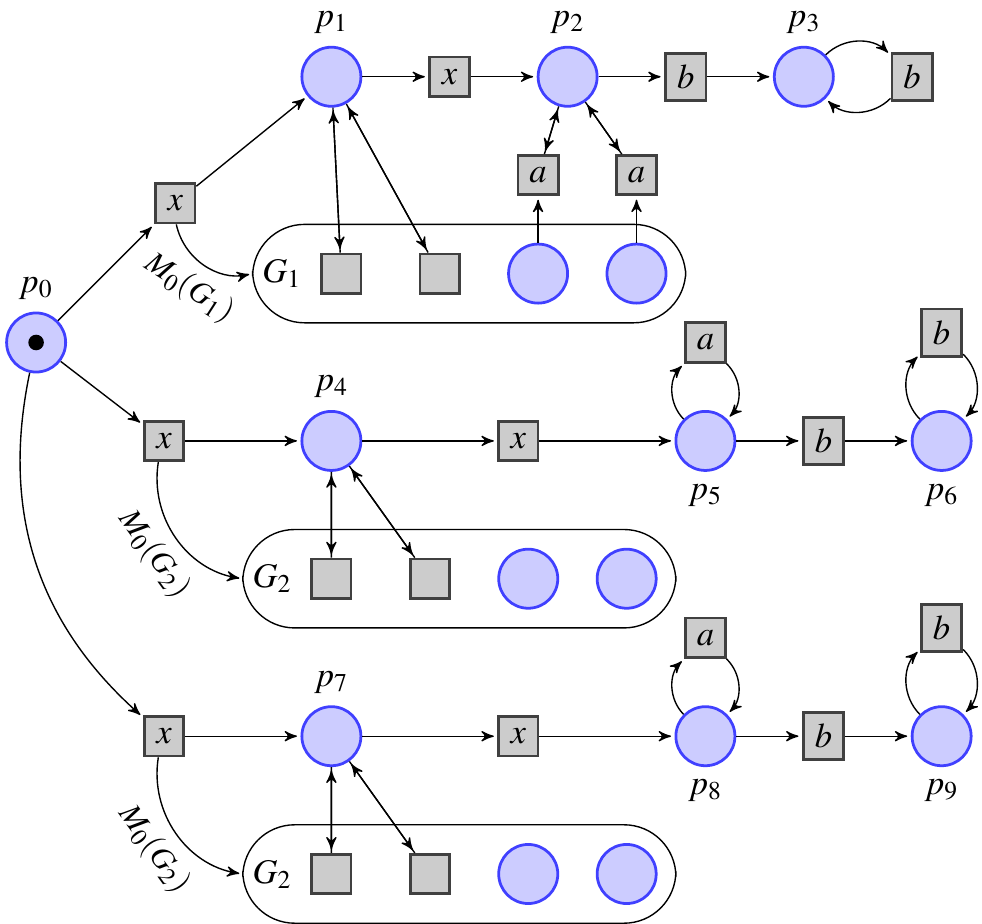}
    \caption{Sketch of the construction; labels depicted in transitions}
    \label{fig2}
  \end{figure}
  We create 10 new places $p_0$ up to $p_9$, and we use new labels $x$, $a$, and $b$ as depicted in Fig.~\ref{fig2}. Place $p_1$ (resp. $p_4$, $p_7$) is connected by a self-loop to every transition of $G_1$ (resp. $G_2$). Intuitively, $p_1$ (resp. $p_4$, $p_7$) allows $G$ to simulate $G_1$ (resp. $G_2)$. For every place of $G_1$, we create a new transition labeled by $a$ to which the place is connected, and through which there is a self-loop from place $p_2$ back to place $p_2$. The intuition is that $p_2$ allows $G$ to remove tokens from the $G_1$ part under a word from $a^*$. The rest of the Petri net $G$ is as depicted in Fig.~\ref{fig2}.

  The initial marking of $G$ consists of a single token in place $p_0$. At the beginning, only the transitions connected to place $p_0$ are enabled. Then, after the first transition (which is labeled by $x$), the net $G$ simulates either $G_1$ or $G_2$ from their corresponding initial markings, and hence the $\omega$-language of $G$ is
  \begin{align*}
    \{xwxa^{y(w)}b^{\omega} \mid w\in L(G_1)\}
    & \cup \{xwx(a^{\omega}+a^*b^{\omega}) \mid w\in L(G_2)\} \\
    & \cup \{xw \mid w\in L^{\omega}(G_1)\cup L^{\omega}(G_2)\}
  \end{align*}
  where
  $y(w)$ is finite and depends on the number of tokens in the net $G_1$ after generating the word $w\in L(G_1)$.

  We show that $L(G_1)\subseteq L(G_2)$ if and only if $G$ is not weakly detectable.

  If $L(G_1)\nsubseteq L(G_2)$, then there exists a word $w \in L(G_1)-L(G_2)$. We now consider all markings of $G_1$ after generating the word $w$. There can be several, but a finite number of such markings, because the length of $w$ is finite and there are no transitions labeled by $\eps$ in $G_1$. We sum the tokens in every such marking and let $k$ denote its maximum. This means that after generating $xwxa^kb$, the marking of $G$ is such that a single token is in place $p_3$, no tokens are in the part of $G_1$, because $k$ is the maximum number of tokens in $G_1$ after generating $w$, so we had to use all of them to generate $a^k$, and the part of $G_2$ contains no tokens. If the net now keeps generating $b^{\omega}$, we stay in this marking for ever. This is the only marking reachable by the $\omega$-word $xwxa^kb^{\omega}$, because $w\notin L(G_2)$. Thus, the net is weakly detectable; the $n$ from the definition is $n = |xwx|+k+1$, which is a constant for such a fixed word $w$.

  If $L(G_1)\subseteq L(G_2)$, then any word $xvxa^ub^{\omega}$ generated using the part with $G_1$, that is, $v\in L(G_1)$ and $u$ is bounded by the number of tokens in any marking of $G_1$ reachable after generating $v$ in $G_1$, can be simulated using the part of $G_2$. Moreover, any word from $\{xwx(a^{\omega}+a^*b^{\omega}) \mid w\in L(G_2)\} \cup \{xw \mid w\in L^{\omega}(G_2)\}$ generated by the part using $G_2$ always leads to at least two different markings because of the two identical parts in $G$ simulating $G_2$, cf. the places $p_4,p_5,p_6$ and $p_7,p_8,p_9$, and hence $G$ is not weakly detectable.
\end{proof}

\subsection{Application to Opacity}
  Opacity is a property related to the privacy and security analysis. The system has a secret modeled as a set of markings and an intruder is modeled as a passive observer with limited observation. The system is opaque if the intruder never knows for sure that the system is in a secret marking. We first recall the definition of opacity for LPNs~\cite{BryansKR05,TongLSG17}.

  \begin{defn}
    Let $G = (N, M_0, \Sigma, \ell)$ be an LPN system and $S \subseteq R(N,M_0)$. System $G$ is {\em current-state opaque\/} with respect to $S$ if for every $M \in S$ and $\sigma \in L(N,M_0)$ such that $M_0 \xrightarrow{\sigma} M$, there exists $\sigma' \in L(N,M_0)$ such that $\ell(\sigma') = \ell(\sigma)$ and $M_0 \xrightarrow{\sigma'} M'$ with $M' \notin S$.
  \end{defn}

  Informally, an LPN system is current-state opaque if for every transition sequence $\sigma$ leading to a marking in the secret set, there is another transition sequence $\sigma'$ whose firing leads to a non-secret marking, and the sequences produce the same observation $\ell(\sigma) = \ell(\sigma')$.

  Tong et al.~\cite{TongLSG17} showed that deciding current-state opacity of an LPN system is undecidable. In their proof, they reduce the inclusion problem for LPNs (is $L(G_1)\subseteq L(G_2)$?) and construct a secret set as large as the reachability set of $G_1$, which is infinite in general. It is a natural question whether undecidability of current-state opacity follows from the infinity of the secret set. Equivalently stated, the question is whether current-state opacity becomes decidable if the secret set is finite. As a consequence of our result, we show that it is not the case, since current-state opacity is undecidable even if the secret set consists of a single marking. This result strengthens and completes the study of Tong et al.~\cite{TongLSG17}.

  \begin{defn}
    Let $G = (N, M_0, \Sigma, \ell)$ be an LPN system, and let $M_s \in R(N,M_0)$ be a secret marking. System $G$ is {\em single-marking current-state opaque\/} with respect to $M_s$ if it is current-state opaque with respect to the set $\{M_s\}$.
  \end{defn}

The following is   a consequence of the proof of Theorem~\ref{thm5}.
  \begin{cor}
    Single-marking current-state opacity for LPNs is undecidable.
  \end{cor}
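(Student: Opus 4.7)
The plan is to reuse the Petri net $G$ constructed in the proof of Theorem~\ref{thm5} and to designate a single marking $M_s$ as the secret so that single-marking current-state opacity of $G$ with respect to $M_s$ is equivalent to the inclusion $L(G_1)\subseteq L(G_2)$. Since the inclusion problem for LPNs is undecidable, this will yield the claim. For $M_s$ I would take the marking reached at the end of the cleanup sequence $xwxa^kb$ described in the proof of Theorem~\ref{thm5}: the marking of $G$ that carries one token in $p_3$, no tokens in any place of the $G_1$- or $G_2$-subnets, and no tokens in $p_0,p_1,p_2,p_4,p_5,p_7,p_8$.

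For the direction $L(G_1)\subseteq L(G_2)\Rightarrow$ opacity, any firing sequence $\sigma$ reaching $M_s$ must traverse the $G_1$-branch, because $p_3$ can receive a token only through that branch, so $\sigma$ has the form $x\tau_1 x a^k b^j$ with $\ell(\tau_1)=w\in L(G_1)$, $j\ge 1$, and $k$ equal to the total number of tokens left in the $G_1$-part after firing $\tau_1$. Using the assumed inclusion I pick a $G_2$-firing sequence $\tau_2$ with $\ell(\tau_2)=w$ and take $\sigma'=x\tau_2 x a^k b^j$; this is enabled because, as recorded by the $\omega$-language of $G$ computed in the proof of Theorem~\ref{thm5}, the $G_2$-branch supports every finite prefix of $a^*b^{\omega}$. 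The resulting marking places tokens in the two copies of the $G_2$-subnet and hence differs from $M_s$, producing the required witness of opacity for the arbitrary $\sigma$.

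For the reverse direction, suppose $w\in L(G_1)\setminus L(G_2)$, and let $\tau_1$ be a $G_1$-firing sequence with $\ell(\tau_1)=w$ whose ending $G_1$-marking carries the maximum token count $k$ over all such sequences; set $\sigma=x\tau_1 x a^k b$, so that $\sigma$ reaches $M_s$. The key step, and the main obstacle, is to show that every $\sigma'$ with $\ell(\sigma')=\ell(\sigma)$ also reaches $M_s$, which then contradicts opacity. I plan to argue this in two parts. First, since $w\notin L(G_2)$, no such $\sigma'$ can use the $G_2$-branch, so $\sigma'=x\tau_1'xa^kb$ for some $G_1$-firing sequence $\tau_1'$ with $\ell(\tau_1')=w$. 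Second, each $a$-labeled transition of the construction removes exactly one token from the $G_1$-part, so enabledness of $a^k$ forces the $G_1$-marking after $\tau_1'$ to contain at least $k$ tokens, which combined with the maximality of $k$ forces it to contain exactly $k$ tokens; thus $a^k$ empties the $G_1$-part and the final $b$ yields $M_s$ regardless of the choice of $\tau_1'$. The bookkeeping in this second part is the only place where the structural details of the construction in Theorem~\ref{thm5} are genuinely used, and it is where I expect most of the work to lie.
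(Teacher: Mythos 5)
Your proposal is correct and follows essentially the same route as the paper: it reuses the net $G$ from the proof of Theorem~\ref{thm5} with the identical secret marking $M_s$ (one token in $p_3$, none elsewhere) and reduces from language inclusion. The only difference is presentational --- the paper simply asserts that opacity with respect to $\{M_s\}$ is equivalent to non-weak-detectability and leans on the analysis already done in Theorem~\ref{thm5}, whereas you re-derive the equivalence with $L(G_1)\subseteq L(G_2)$ directly, correctly identifying the token-counting step for $a^k$ as the place where the structural details matter.
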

  \begin{proof}
    Consider the net $G$ constructed in the proof of Theorem~\ref{thm5}, and let the secret set consist of the marking $M_s$ having a single token in place $p_3$ and no tokens in other places. Then, $G$ is current-state opaque with respect to the secret set $\{M_s\}$ if and only if $G$ is not weakly detectable.
  \end{proof}

\section{Conclusions}\label{sec:5}\vspace{-10pt}
  We investigated the existence of algorithms to decide strong and weak detectability for LPNs. We showed that whereas there is an algorithm checking strong detectability, but this algorithm is infeasible because it requires at least exponential space, there is no algorithm checking weak detectability. We also discussed the question whether the undecidability of current-state opacity follows from the possibly infinite secret set and, as a consequence of our results, we showed that it is not the case. Namely, current-state opacity remains undecidable even if the secret set is a singleton.

  Besides strong and weak detectability, there are other notions of detectability proposed in the literature, such as initial-state detectability~\cite{shu2013detectability}, generalized detectability~\cite{ShuLin2011} or delayed detectability~\cite{shu2013delayed}. Investigating the verification of these variants for LPNs is an interesting future direction.

\appendix
\section{Complexity of the Assumption}\label{appA}
  Here we discuss the complexity of checking that the system does not generate an infinite unobservable sequence and show that it is EXPSPACE-complete.
  Given a net, the property can be expressed in Yen's path logic as a sequence $M_0\xrightarrow{s_1} M_1 \xrightarrow{s_2} M_2$ such that $M_1\le M_2 \land s_2\neq\lambda\land \ell(s_2)=\varepsilon$. Eliminating the transition predicate according to Yen's Lemma~3.2 of \cite{yen1992unified}  results in an increasing path formula~\cite{AtigH11}, and hence the satisfiability of this formula is in EXPSPACE.
  To show EXPSPACE-hardness, we reduce the coverability problem. Let $G$ be an LPN and $M$ be a marking. We modify $G$ by adding an unobservable transition that is a self-loop requiring all and exactly the tokens of $M$ to fire, returning the tokens back to $M$. Then $M$ is coverable in $G$ if and only if the modified net has an infinite sequence of unobservable transitions (the added unobservable self-loop).

\bibliographystyle{plain}
\bibliography{biblio}

\end{document}